\documentclass[conference]{IEEEtran}

\usepackage{mathtools}
\usepackage{amssymb}
\usepackage{amsmath}
\usepackage{amsthm}
\usepackage{algorithm}
\usepackage{algpseudocode}
\usepackage{enumerate,multirow}
\usepackage{algpseudocode}
\usepackage{tikz}
\usepackage{tikz}
\usetikzlibrary{patterns,positioning,arrows,calc}
\usepackage{siunitx}
\usepackage{multirow}
\usepackage{multicol}
\usepackage{color}
\usepackage{xcolor}
\usepackage{algorithm}
\usepackage{algpseudocode}
\usepackage{graphicx}
\usepackage{subcaption}
\usepackage{hvlogos}
\usepackage{qtree}
\usepackage{tikz}
\usepackage{multirow}
\usepackage{hhline}
\usepackage{hyperref}
\usetikzlibrary{positioning, backgrounds, fit}
\usepackage{tikz}
\usetikzlibrary{positioning, backgrounds, fit}
\usepackage{subcaption}
\usepackage{xcolor}
\usepackage{pgfplots}
\pgfplotsset{compat=1.18}
\usetikzlibrary{pgfplots.statistics}
\algrenewcommand\algorithmicrequire{\textbf{Input:}}
\algrenewcommand\algorithmicensure{\textbf{Output:}}
\usepackage{comment}

\usepackage{geometry}
\usepackage{tikz}
\usepackage{tikz}
\usetikzlibrary{patterns,positioning,arrows,calc}\usepackage{graphicx}
\usepackage{graphicx}
\usepackage{tikz}
\usetikzlibrary{trees}
\usepackage{amsmath}

\date{}

\usepackage{calc}
\newcolumntype{M}[1]{>{\centering\arraybackslash}m{#1}}
\newcolumntype{N}{@{}m{0pt}@{}}

\usepackage{mdwmath}
\usepackage{blindtext}
\usepackage{eqparbox}
\usepackage{stfloats}
\usepackage[numbers,sort&compress]{natbib}
\usepackage{algorithm}
\usepackage{algpseudocode}
\usepackage{tikz}
\usetikzlibrary{patterns, arrows.meta, positioning}
\IEEEoverridecommandlockouts
\title{\Huge
{Multiple-Bases Belief Propagation List Decoding for Quantum LDPC Codes}}
\author{\IEEEauthorblockN{Sheida Rabeti and Hessam Mahdavifar} 
\IEEEauthorblockA{Department of Electrical and Computer Engineering, Northeastern University, Boston, MA 02115, USA \\ 
Email: \{rabeti.s, h.mahdavifar\}@northeastern.edu}
\thanks{This work was supported in part by NSF under Grant CCF-2415440.}
}

\newtheorem{theorem}{Theorem}

\AtBeginEnvironment{remark}{\itshape}
\newtheorem{lemma}[theorem]{Lemma}

\newtheorem{example}{Example}

\newcommand{\cG}{{\cal G}}

\newcommand{\cL}{{\cal L}}

\newcommand{\cP}{{\cal P}}

\newcommand{\cS}{{\cal S}}

\DeclareMathAlphabet{\mathbfsl}{OT1}{ppl}{b}{it} 

\newcommand{\be}[1]{\begin{equation}\label{#1}}
\newcommand{\ee}{\end{equation}}

\renewcommand{\le}{\leqslant} 
\renewcommand{\leq}{\leqslant}
\renewcommand{\ge}{\geqslant} 
\renewcommand{\geq}{\geqslant}

\newcommand{\Cref}[1]{Co\-ro\-lla\-ry\,\ref{#1}}

\begin{document}
\vspace{10mm}
\maketitle
\begin{abstract}
In this paper, we propose a belief-propagation (BP)-based decoder, termed the Multiple-Bases Belief-Propagation List Decoder (MBBP-LD), for quantum low-density parity-check (QLDPC) codes. The key idea is to generate \emph{structured decoding diversity} by constructing multiple redundant parity-check representations via cycle-free subtree decompositions of the Tanner graph, and running BP decoding in parallel across these representations. This extends the classical Multiple-Bases Belief-Propagation (MBBP) framework to the quantum setting while preserving the linear-time complexity and efficiency of standard BP decoding, and avoids the need for super-linear post-processing.

Simulation results demonstrate that MBBP-LD improves upon existing BP-based decoders, including BP with ordered statistics decoding (BP-OSD) and belief propagation with guided decimation (BPGD) across several QLDPC codes, while requiring substantially fewer total BP iterations. For bivariate bicycle codes $[[144,12,12]]$ and $[[288,12,18]]$, MBBP-LD achieves up to $20\%$ reduction in error rate compared to BPGD and up to $30\%$ compared to BP-OSD in the low- and moderate-error regimes. For the larger B1 code $[[882, 24, 18 \leq d \leq 24]]$, MBBP-LD attains comparable or improved performance relative to BPGD while maintaining BP-like decoding latency under parallel implementation.
\end{abstract}

\section{Introduction}

Quantum error-correcting codes are essential for fault-tolerant quantum computation. Among existing constructions, quantum low-density parity-check (QLDPC) codes have attracted significant attention due to their sparse parity-check matrices, which reduce qubit--qubit interactions during syndrome extraction and make them well suited for architectures with limited qubit connectivity. Building on the success of classical LDPC codes introduced by Gallager \cite{gallager1962low}, recent advances in QLDPC code constructions have further highlighted their potential for scalable quantum computing \cite{7336474,Breuckmann_2021,vasic2025quantumlowdensityparitycheckcodes}. However, realizing this potential in practice depends on the availability of efficient and high-performance decoding algorithms.

Belief propagation (BP) remains one of the most widely used decoding algorithms for both classical and quantum LDPC codes due to its low complexity and highly optimized hardware implementations. However, in QLDPC codes, its performance is often degraded by code degeneracy, short cycles, and trapping sets in the Tanner graph. A variety of BP-based improvements have been proposed to address these limitations. These include BP with ordered statistics decoding (BP-OSD), which achieves strong performance at the cost of a super-linear complexity dominated by the order-statistics stage scaling as $\mathcal{O}(n^3)$ in the worst case (where $n$ is the block length) \cite{Panteleev_2021}, as well as iterative enhancements such as layered decoding, serial scheduling, and bit-flipping methods \cite{crest2023layereddecodingquantumldpc,Raveendran2021trappingsetsof,chytas2025enhancedminsumdecodingquantum}. Other approaches rely on additional post-processing, including stabilizer inactivation (BP-SI) and belief propagation with guided decimation (BPGD), with complexities $\mathcal{O}(n^2\log n)$ and $\mathcal{O}(n^2)$, respectively \cite{crest2023stabilizerinactivationmessagepassingdecoding,yao2024beliefpropagationdecodingquantum}. Further variants such as Ambiguity Clustering, Localized Statistics Decoding, and the Ordered Tanner Forest method also aim to improve decoding performance through increasingly sophisticated post-processing steps \cite{wolanski2024ambiguity,hillmann2025localized,demarti2024almost}. 

In this work, we propose a Multiple-Bases Belief-Propagation List-Decoding (MBBP-LD) framework for QLDPC codes. The key idea is to generate \emph{structured decoding diversity} within the BP framework, enabling improved error-rate performance while preserving the low-complexity and hardware-efficient nature of BP. In particular, the proposed approach maintains decoding latency comparable to standard BP under parallel implementation and avoids the need for super-linear post-processing, in contrast to existing high-performance BP-based decoders such as BP-OSD and BPGD.

More specifically, the proposed decoder constructs multiple redundant parity-check representations, executes BP decoding on each in parallel, and aggregates the resulting candidate error estimates through a list-based decision stage. Building on classical MBBP decoding \cite{Hehn_2010,4557244}, our approach extends BP-based methods in two important respects. First, we employ augmented parity-check constructions that allow duplicated parity-check rows \cite{rigby2018augmented,rigby2019modified}. Second, and more importantly, instead of random row selection as in \cite{rigby2018augmented,rigby2019modified}, we introduce deterministic subtree-based decompositions of the Tanner graph to construct \emph{structured} redundant representations. These cycle-free substructures induce diverse decoding trajectories across parallel BP instances, thereby mitigating trapping-set effects and enhancing decoding performance.

\section{Preliminaries}
\subsection{Quantum Stabilizer and CSS Codes}
An $[[n, k, d]]$ quantum stabilizer code is a $2^k$-dimensional subspace $\mathcal{C} \subseteq (\mathbb{C}^2)^{\otimes n}$ with an Abelian stabilizer group 
$\mathcal{S}$:
\begin{equation} \label{eq:StabilizerCode} \mathcal{C} = \{|\psi\rangle \in (\mathbb{C}_2)^{\otimes n} \colon s |\psi\rangle= |\psi\rangle, \forall s \in \mathcal{S} \}. \end{equation}
Each generator $g \in \mathcal{S}$ acts as a parity-check constraint. The \textit{minimum distance} $d$ of a stabilizer code is the minimum weight of some Pauli operator $P \in \cP_n$ commuting with elements in $\cS$ such that $P \notin \cS$. 

A Calderbank--Shor--Steane (CSS) code is a stabilizer code with a parity-check matrix of the form
$
H = \begin{bmatrix} H_X & 0 \\ 0 & H_Z \end{bmatrix},
$
where $H_X,H_Z$ are classical binary parity-check matrices satisfying $H_X H_Z^T = 0$. Such codes can correct Pauli-$X$ and Pauli-$Z$ errors independently using $H_Z$ and $H_X$, respectively. In this work, we focus on QLDPC CSS codes, where both $H_X$ and $H_Z$ are sparse.

\subsection{Belief Propagation (BP) Decoding}
BP decoding operates on the Tanner graph of $\cG = (V_c \cup V_v, E)$, where $V_c$ are check nodes, $V_v$ are variable nodes, and $E$ denotes edges. Also, $d_v$ and $d_c$ denote the maximum degrees of variable nodes and check nodes, respectively. For each variable $v_i$, the log-likelihood ratio (LLR) is initialized as $\mu_i = \log \frac{1-p}{p}$, where $p$ is the physical error probability (assumed identical for Pauli-$X$ and Pauli-$Z$ errors). During iterations, messages are updated using the normalized min-sum rule:
\vspace{-2mm}
\begin{align}
m^{(t+1)}_{c \to v} &= \alpha \!\!\left(\prod_{v' \in N(c)\setminus v} \mathrm{sign}\big(m^{(t)}_{v' \to c}\big)\right)
\min_{v' \in N(c)\setminus v} \!\!\left| m^{(t)}_{v' \to c} \right|, \\
m^{(t+1)}_{v \to c} &= \mu_i + \sum_{c' \in N(v)\setminus c} m^{(t)}_{c' \to v}.
\end{align}
The posterior LLR is $m^{(t)}_i = \mu_i + \sum_{c \in N(v_i)} m^{(t)}_{c \to v_i}$.
A hard decision is made as $\hat{e}_i = 0$ if $m^{(t)}_i \ge 0$, otherwise $\hat{e}_i=1$. The algorithm halts when all parity checks satisfy the syndrome (i.e., a valid error pattern is found) or when a maximum number of iterations is reached. Serial scheduling updates check nodes sequentially and immediately propagates messages within the same iteration, improving convergence speed. In QLDPC codes, the presence of short cycles and degeneracy can significantly affect the convergence behavior of BP, motivating the need for modified decoding strategies.

\subsection{Belief Propagation with Augmented Parity Checks}
Augmented parity-check (APC) decoding improves BP by introducing redundancy through duplicating a subset of rows in the original parity-check matrix \cite{rigby2018augmented,rigby2019modified}. Let $H\in\mathbb{F}_2^{m\times n}$ denote the original parity-check matrix and define the augmented matrix as
$
H_A=
\begin{bmatrix}
H\\
H_{\text{dup}}
\end{bmatrix},
$
where $H_{\text{dup}}$ contains duplicated rows from $H$. BP is then executed on $H_A$ instead of $H$. Duplicated checks place more emphasis on certain parity constraints and change the convergence behavior of BP, which can help mitigate trapping sets while preserving the same per-iteration complexity as standard BP \cite{rigby2019modified}. Existing APC methods typically generate $H_{\text{dup}}$ through random row selection. However, the use of randomly selected duplicated rows may not fully exploit the underlying graph structure, which motivates more structured approaches to constructing redundant parity-check representations.

\section{Multiple-Bases Belief-Propagation List-decoding (MBBP-LD)}
\label{sec:RPC-LD}

In this section, we extend the classical Multiple-Bases Belief-Propagation (MBBP) framework \cite{Hehn_2010,4557244} to the quantum setting. Unlike conventional augmented parity-check (APC) decoding \cite{rigby2018augmented,rigby2019modified}, which generates redundant rows through random duplication of parity checks, we construct structured redundant parity-check representations based on subtree decompositions of the Tanner graph. We also introduce a decision rule for selecting the final estimate from the candidate list based on candidate frequency and Hamming weight.

\subsection{MBBP Decoding via Tree-Based Construction}
\label{sec:RPC-LD-A}
MBBP decoding runs multiple BP decoders in parallel, each operating on a distinct parity-check matrix representation of the same code. Let $H^{(\ell)}$, $\ell \in \{1, \ldots, L\}$, denote the matrix used by the $\ell$-th decoder, and let $\hat{e}_\ell$ be the corresponding error estimate after at most $i$ iterations. Each decoder that converges contributes its output to a candidate list $\mathcal{L} = \{\hat{e}_s \mid s \in \mathcal{S}\}$, where $\mathcal{S} \subseteq \{1, \ldots, L\}$ denotes the set of successful decoders. In prior work \cite{Hehn_2010,4557244}, the elements of $\mathcal{L}$ are passed to a least-metric selector (LMS), which selects the most likely codeword according to the channel distribution. In this work, we instead employ an alternative decision rule, described next.

We construct the parity-check matrices used by the parallel decoders by augmenting the original matrix $H$ with redundant layers derived from a collection $\mathcal{T}$ of maximal subtrees in the Tanner graph that partition the check nodes. A subtree $t \in \mathcal{T}$ is called \emph{maximal} if no additional check node, together with its adjacent variable nodes, can be included without forming a cycle. Intuitively, these cycle-free substructures capture locally tree-like components of the Tanner graph, which are known to improve the behavior of BP.

The set $\mathcal{T}$ is obtained by exploring the check nodes and forming maximal subtrees, as detailed in Algorithm \ref{alg:subtree}. Each subtree $t$ defines a submatrix $H_t$, and the corresponding augmented matrix is given by
\begin{equation}
    H^{(t)} = 
    \begin{bmatrix}
        H \\[2pt]
        H_t
    \end{bmatrix}.
\end{equation}
BP decoding is then run in parallel on all constructed matrices $\{H^{(t)} \mid t \in \mathcal{T}\}$. Each decoder produces an estimate $\hat{e}_t$ after a fixed number of iterations, and those that converge contribute their outputs to the candidate list $\mathcal{L} = \{\hat{e}_t \mid t \in \mathcal{T}_{\mathrm{conv}}\}$, where $\mathcal{T}_{\mathrm{conv}} \subseteq \mathcal{T}$ denotes the set of successful decoders. The list $\mathcal{L}$ is then passed to a decision function $f_{\mathrm{DM}}$, which selects the final estimate $\hat{e}$ according to a certain selection rule described in the next subsection. The overall decoding scheme is illustrated in Fig. \ref{fig:tree-mbbp}, with pseudocode provided in Algorithm \ref{alg:mbbp-decoder}.

\begin{figure}[t]
\centering
\begin{tikzpicture}[>=stealth', font=\sffamily]

\node[draw, minimum width=0.9cm, minimum height=2.5cm, label=above:$H$] (H) {};

\node[draw, pattern=north east lines, pattern color=gray,
      minimum width=0.9cm, minimum height=0.5cm,
      anchor=north] (S1) at (H.north) {$t_1$};

\node at (H.center) {$\vdots$};

\node[draw, pattern=horizontal lines, pattern color=gray,
      minimum width=0.9cm, minimum height=0.5cm,
      anchor=south] (Sl) at (H.south) {$t_\ell$};

\draw[->, thick] (H.east) -- ++(0.7,0);
\coordinate (Hmid) at (H.center);

\node[draw, minimum width=0.9cm, minimum height=0.5cm,
      right=1.7cm of Hmid, yshift=1cm, anchor=center] (H1) {$H$};
\node[draw, pattern=north east lines, pattern color=gray,
      minimum width=0.9cm, minimum height=0.6cm,
      below=0cm of H1.south, anchor=north] (HS1) {$H_{t_1}$};

\draw[->] (HS1.north east) -- ++(1.0,0) node[midway, above] {\scriptsize BP-Dec}
           coordinate (bp1);

\node[draw, minimum width=0.9cm, minimum height=0.5cm,
      right=1.7cm of Hmid, yshift=-0.45cm, anchor=center] (H2) {$H$};
\node[draw, pattern=horizontal lines, pattern color=gray,
      minimum width=0.9cm, minimum height=0.3cm,
      below=0cm of H2.south, anchor=north] (HSl) {$H_{t_\ell}$};

\draw[->] (HSl.north east) -- ++(1.0,0) node[midway, above] {\scriptsize BP-Dec}
           coordinate (bpl);

\node[draw, fill=yellow!40, minimum width=1.4cm, minimum height=0.5cm,
      right=4.7cm of Hmid, anchor=center] (L) {$\mathcal{L}$};
\node[draw, fill=orange!40, minimum width=1.4cm, minimum height=0.5cm,
      right=1.7cm of L, anchor=center] (DM) {$f_{\mathrm{DM}}$};

\draw[->, thick] (L.east) -- (DM.west);
\draw[->, thick] (DM.east) -- ++(0.5,0) node[right] {$\hat{e}$};

\draw[densely dashed, ->] (bp1) to[bend left=12] (L.west);
\draw[densely dashed, ->] (bpl) to[bend right=12] (L.west);

\node[font=\small] at ($(bp1)!0.45!(L.west)+(0,0.35)$) {$\hat{e}_1$};
\node[font=\small] at ($(bpl)!0.45!(L.west)-(0,0.35)$) {$\hat{e}_\ell$};

\end{tikzpicture}
\caption{MBBP-LD decoding with redundant-row construction.}
\label{fig:tree-mbbp}
\end{figure}

\begin{algorithm}[t]
\caption{MBBP-LD Decoder via Decision-Maker $f_{\mathrm{DM}}$ for QLDPC Codes}
\label{alg:mbbp-decoder}
\begin{algorithmic}[1]
\Require Parity-check matrix $H \in \{0,1\}^{m \times n}$; collection of maximal subtrees $\mathcal{T} = \{t_1, \ldots, t_{|\mathcal{T}|}\}$; syndrome vector $s \in \{0,1\}^m$; channel parameter $p$; decision rule $f_{\mathrm{DM}}$
\Ensure Estimated error vector $\hat{e}$

\State Initialize candidate list $\mathcal{L} \gets \emptyset$

\Statex \textbf{Decoding Phase:}
\For{each $t \in \mathcal{T}$}
    \State $H^{(t)} \gets [\,H; H_t\,]$
    \State $(\hat{e}_t, \textit{converged}) \gets \textsc{BP-Decode}(H^{(t)}, s, p)$
    \If{\textit{converged}}
        \State $\mathcal{L} \gets \mathcal{L} \cup \{\hat{e}_t\}$
    \EndIf
\EndFor

\Statex \textbf{Decision Phase:}
\State $\hat{e} \gets f_{\mathrm{DM}}(\mathcal{L})$

\Return $\hat{e}$
\end{algorithmic}
\end{algorithm}

Algorithm \ref{alg:subtree} partitions the check nodes into a collection of maximal subtrees $\mathcal{T}$ based on a permutation $\pi$ of $V_c$, which determines the order in which check nodes are selected as roots. Starting from each unassigned root, the subtree is expanded by including its neighboring variable nodes and subsequently their adjacent check nodes. A candidate check node is added only if its inclusion preserves the cycle-free property of the induced subgraph. This expansion continues until no further check nodes can be included without creating a cycle, yielding a maximal subtree. Repeating this procedure for all unassigned check nodes produces a partition of $V_c$ into disjoint subsets. Different permutations $\pi$ lead to different collections $\mathcal{T}$ and hence different augmented matrices $H^{(t)}$, resulting in diverse decoding behaviors.

\begin{algorithm}
\caption{Maximal Subtrees Construction}
\label{alg:subtree}
\begin{algorithmic}[1]
\Require Parity-check matrix $H \in \{0,1\}^{m \times n}$; permutation $\pi$ of 
check nodes $V_c$, defining the order in which roots are selected
\Ensure Assignment $\texttt{check\_sets} \in \mathbb{Z}^m$ mapping each check node to its subtree index

\State Construct the Tanner graph $G = (V_c \cup V_v, E)$.
\State Initialize $\texttt{check\_sets}[c] \leftarrow 0$ for all $c \in V_c$; set subtree index $s \leftarrow 0$.

\For{each $c \in V_c$ in order $\pi$}
    \If{$\texttt{check\_sets}[c] = 0$}
        \State $s \leftarrow s + 1$
        \State Initialize BFS queue $Q \leftarrow \{c\}$.
        \State Mark all $v \in V_v$ as unvisited; mark all $c' \in V_c$ as untested.
        \While{$Q \neq \emptyset$}
            \State Dequeue $u$ from $Q$; mark $u$ as tested.
            \State Let $\eta \leftarrow |\{v \in N(u) : v \text{ is visited}\}|$
            \If{$\eta \leq 1$}
                \Comment{At most one visited variable neighbour $\Rightarrow$ cycle-free}
                \State $\texttt{check\_sets}[u] \leftarrow s$
                \For{each $v \in N(u)$}
                    \State Mark $v$ as visited.
                    \For{each untested $c' \in N(v)$ with $\texttt{check\_sets}[c'] = 0$}
                        \State Enqueue $c'$ into $Q$.
                    \EndFor
                \EndFor
            \EndIf
        \EndWhile
    \EndIf
\EndFor
\State \Return $\texttt{check\_sets}$
\end{algorithmic}
\end{algorithm}

\definecolor{varblue}{RGB}{180,210,240}
\definecolor{varborder}{RGB}{50,110,170}
\definecolor{t1fill}{RGB}{240,185,165}
\definecolor{t1border}{RGB}{195,85,45}
\definecolor{t1text}{RGB}{195,85,45}
\definecolor{t2fill}{RGB}{170,220,200}
\definecolor{t2border}{RGB}{25,130,100}
\definecolor{t2text}{RGB}{25,130,100}
\definecolor{rejectgray}{RGB}{200,200,200}
\definecolor{rejectborder}{RGB}{120,120,120}
 
\tikzset{
  vnode/.style={
    circle, draw=varborder, fill=varblue,
    minimum size=16pt, inner sep=0pt, font=\scriptsize
  },
  ct1/.style={
    rectangle, rounded corners=2pt,
    draw=t1border, fill=t1fill,
    minimum size=16pt, inner sep=0pt, font=\scriptsize
  },
  ct2/.style={
    rectangle, rounded corners=2pt,
    draw=t2border, fill=t2fill,
    minimum size=16pt, inner sep=0pt, font=\scriptsize
  },
  crej/.style={
    rectangle, rounded corners=2pt,
    draw=rejectborder, fill=rejectgray,
    minimum size=16pt, inner sep=0pt, font=\scriptsize
  },
  tedge/.style={draw=black!80, line width=0.5pt},
  treeedge/.style={draw=black, line width=0.8pt},
}
 
\begin{figure}[t]
  \centering
  \scalebox{0.77}{%
\begin{tikzpicture}
 

\draw[t1border, dashed, rounded corners=4pt, line width=0.8pt]
  (-0.6,-2.9) rectangle (3.6,6.6);

\draw[t2border, dashed, rounded corners=4pt, line width=0.8pt]
  (3.9,-2.9) rectangle (8.1,6.6);

 
\node[font=\footnotesize] at (3.6, 7.5) {$
H = \left[\begin{array}{cccccc}
{\color{t1text}1} & {\color{t1text}1} & {\color{t1text}1} & {\color{t1text}0} & {\color{t1text}0} & {\color{t1text}0} \\[2pt]
{\color{t2text}0} & {\color{t2text}1} & {\color{t2text}1} & {\color{t2text}0} & {\color{t2text}0} & {\color{t2text}1} \\[2pt]
{\color{t1text}1} & {\color{t1text}0} & {\color{t1text}0} & {\color{t1text}1} & {\color{t1text}1} & {\color{t1text}0} \\[2pt]
{\color{t2text}0} & {\color{t2text}0} & {\color{t2text}0} & {\color{t2text}1} & {\color{t2text}1} & {\color{t2text}1}
\end{array}\right]
\!\!
\begin{array}{l}
{\color{t1text}\leftarrow c_1} \\[2pt]
{\color{t2text}\leftarrow c_2} \\[2pt]
{\color{t1text}\leftarrow c_3} \\[2pt]
{\color{t2text}\leftarrow c_4}
\end{array}
$};
 
\node[font=\small] at (2.22, 8.4) {$v_1$};
\node[font=\small] at (2.72, 8.4) {$v_2$};
\node[font=\small] at (3.22, 8.4) {$v_3$};
\node[font=\small] at (3.72, 8.4) {$v_4$};
\node[font=\small] at (4.22, 8.4) {$v_5$};
\node[font=\small] at (4.72, 8.4) {$v_6$};

 
\node[font=\small\bfseries, text=t1border] at (1.5, 6.0)
  {Subtree $t_1 = \{c_1, c_3\}$};
 
\node[ct1] (c1) at (1.5, 5.3) {$c_1$};
 
\node[vnode] (v1) at (0.5, 4.3) {$v_1$};
\node[vnode] (v2) at (1.5, 4.3) {$v_2$};
\node[vnode] (v3) at (2.5, 4.3) {$v_3$};
 
\draw[treeedge] (c1) -- (v1);
\draw[treeedge] (c1) -- (v2);
\draw[treeedge] (c1) -- (v3);
 
\node[ct1] (c3) at (0.5, 3.3) {$c_3$};
\draw[treeedge] (v1) -- (c3);
 
\node[vnode] (v4) at (0.0, 2.3) {$v_4$};
\node[vnode] (v5) at (1.0, 2.3) {$v_5$};
\draw[treeedge] (c3) -- (v4);
\draw[treeedge] (c3) -- (v5);
 
\node[crej] (rc2) at (2, 3.3) {$c_2$};
\node[font=\small, text=black, align=center] at (2, 2.8)
  {$\eta{=}2$};
\draw[black, dashed, line width=0.8pt] (v2) -- (rc2);
\draw[black, dashed, line width=0.8pt] (v3) -- (rc2);

\node[crej] (rc4) at (0.5, 1.3) {$c_4$};
\node[font=\small, text=black, align=center] at (0.5, 0.8)
  {$\eta{=}2$};
\draw[black, dashed, line width=0.8pt] (v4) -- (rc4);
\draw[black, dashed, line width=0.8pt] (v5) -- (rc4);

 
\def\xoff{4.35}

\node[font=\small\bfseries, text=t2border] at (\xoff+1.5, 6.0)
  {Subtree $t_2 = \{c_4, c_2\}$};
 
\node[ct2] (c4) at (\xoff+1.5, 5.3) {$c_4$};
 
\node[vnode] (sv4) at (\xoff+0.5, 4.3) {$v_4$};
\node[vnode] (sv5) at (\xoff+1.5, 4.3) {$v_5$};
\node[vnode] (sv6) at (\xoff+2.5, 4.3) {$v_6$};
 
\draw[treeedge] (c4) -- (sv4);
\draw[treeedge] (c4) -- (sv5);
\draw[treeedge] (c4) -- (sv6);
 
\node[ct2] (c2) at (\xoff+2.5, 3.3) {$c_2$};
\draw[treeedge] (sv6) -- (c2);
 
\node[vnode] (sv2) at (\xoff+2.0, 2.3) {$v_2$};
\node[vnode] (sv3) at (\xoff+3.0, 2.3) {$v_3$};
\draw[treeedge] (c2) -- (sv2);
\draw[treeedge] (c2) -- (sv3);


\draw[t1border!40, line width=0.4pt] (-0.45,0.5) -- (3.45,0.5);
\draw[t2border!40, line width=0.4pt] (4.05,0.5) -- (7.95,0.5);


\node[font=\small\bfseries, text=t1border] at (1.5,0.25) {$H^{(t_1)}$};
\node[font=\small\bfseries, text=t2border] at (6.0,0.25) {$H^{(t_2)}$};


\node[scale=0.88] at (1.45,-1.35) {$
\left[\begin{array}{cccccc}
1 & 1 & 1 & 0 & 0 & 0 \\[1pt]
0 & 1 & 1 & 0 & 0 & 1 \\[1pt]
1 & 0 & 0 & 1 & 1 & 0 \\[1pt]
0 & 0 & 0 & 1 & 1 & 1 \\[2pt]
\hline\\[-5pt]
{\color{t1text}1} & {\color{t1text}1} & {\color{t1text}1} & {\color{t1text}0} & {\color{t1text}0} & {\color{t1text}0} \\[1pt]
{\color{t1text}1} & {\color{t1text}0} & {\color{t1text}0} & {\color{t1text}1} & {\color{t1text}1} & {\color{t1text}0}
\end{array}\right]
\begin{array}{@{}l@{}}
\left.\rule{0pt}{1.02cm}\right\} H \\[0.02cm]
\left.\rule{0pt}{0.52cm}\right\} {\color{t1text}H_{t_1}}
\end{array}
$};


\node[scale=0.88] at (6.00,-1.35) {$
\left[\begin{array}{cccccc}
1 & 1 & 1 & 0 & 0 & 0 \\[1pt]
0 & 1 & 1 & 0 & 0 & 1 \\[1pt]
1 & 0 & 0 & 1 & 1 & 0 \\[1pt]
0 & 0 & 0 & 1 & 1 & 1 \\[2pt]
\hline\\[-5pt]
{\color{t2text}0} & {\color{t2text}0} & {\color{t2text}0} & {\color{t2text}1} & {\color{t2text}1} & {\color{t2text}1} \\[1pt]
{\color{t2text}0} & {\color{t2text}1} & {\color{t2text}1} & {\color{t2text}0} & {\color{t2text}0} & {\color{t2text}1}
\end{array}\right]
\begin{array}{@{}l@{}}
\left.\rule{0pt}{1.02cm}\right\} H \\[0.02cm]
\left.\rule{0pt}{0.52cm}\right\} {\color{t2text}H_{t_2}}
\end{array}
$};
 
\end{tikzpicture}%
}
\caption{Parity-check matrix $H$ and subtree partition under permutation $\pi=(c_1,c_4,c_2,c_3)$. (a) Matrix $H$. (b) Subtrees $t_1$ and $t_2$. (c) Augmented matrices $H^{(t_1)}$ and $H^{(t_2)}$.}
\label{fig:subtree-partition}
\end{figure}

\begin{example}
Fig. \ref{fig:subtree-partition} illustrates the subtree construction for the parity-check $H$ under the permutation $\pi=(c_1,c_4,c_2,c_3)$. The permutation first selects $c_1$ as the root of the first subtree. The algorithm expands through its neighboring variable nodes and adds $c_3$, since it does not add any cycle. Check nodes $c_2$ and $c_4$ are excluded because they would introduce cycles (shown by $\eta=2$). This forms the first subtree $t_1=\{c_1,c_3\}$.
The next unassigned check node in $\pi$ is $c_4$, which becomes the root of the second subtree. Expanding from $c_4$ adds $c_2$ while maintaining the tree structure, resulting in $t_2=\{c_4,c_2\}$. The resulting partition is $\mathcal{T}=\{t_1,t_2\}$, which generates the augmented matrices $H^{(t_1)}$ and $H^{(t_2)}$.
\end{example}

Lemma \ref{lem:subtree-size} provides an upper bound on the number of additional rows introduced by any subtree augmentation, which directly determines the complexity of BP decoding over the augmented parity-check matrices. Its implications for serial-scheduling BP are further discussed in Section \ref{sec:num-res}.
\begin{lemma}[Subtree Size Bound]
\label{lem:subtree-size}
For a Tanner graph $\mathcal{G} = (V_c \cup V_v, E)$ with check-regular degree $w$, 
the number of check nodes in any subtree $t \in \mathcal{T}$ generated by Algorithm  \ref{alg:subtree} satisfies
\begin{equation}
    |t| \le \frac{|V_v| - 1}{w - 1} \triangleq \delta(|V_v|, w).
\end{equation}
In particular, for $w = 6$ generalized bicycle (GB) codes where $|V_v| = 2|V_c|$, this bound reduces to
\begin{equation}
    |t| \le \frac{2|V_c| - 1}{5} \le 0.4\,|V_c|.
\end{equation}
\end{lemma}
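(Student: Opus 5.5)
The plan is to count vertices and edges in the subgraph induced by a single subtree and apply the forest inequality. Fix a subtree $t \in \mathcal{T}$ produced by Algorithm~\ref{alg:subtree}. Let $G_t$ be the subgraph of the Tanner graph consisting of the check nodes in $t$, all their variable neighbours $N(t) = \bigcup_{c\in t} N(c)$, and all edges between them. The first step is to verify that $G_t$ is a tree. I would argue by induction on the order in which checks receive the label $s$. The root contributes a star, which is a tree. A later check $u$ is accepted only if $\eta \le 1$, meaning at most one of its neighbours was already marked visited. Every variable adjacent to an already accepted check of the current subtree has been marked visited. Also, $u$ enters the queue only through a visited neighbour $v$, so $\eta \ge 1$ and therefore exactly $\eta = 1$. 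Adding $u$ thus attaches a new star to the current tree through the single shared variable $v$. The result is connected and acyclic.

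One subtlety needs checking. The visited marks are reset when a new root is started, so variables visited during earlier subtrees do not interfere. Checks assigned to earlier subtrees are never enqueued because of the test $\texttt{check\_sets}[c']=0$. So the induction only involves checks of the current subtree.

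Next comes the counting. Since $G_t$ is a tree, $|E(G_t)| = |V(G_t)| - 1$. Every check has degree exactly $w$, and all its edges lie in $G_t$, so $|E(G_t)| = w|t|$. The vertex set has $|t| + |N(t)|$ elements. Hence
\begin{equation*}
w|t| = |t| + |N(t)| - 1 \le |t| + |V_v| - 1,
\end{equation*}
which rearranges to $|t| \le (|V_v|-1)/(w-1)$, assuming $w \ge 2$. For generalized bicycle codes, substitute $w=6$ and $|V_v| = 2|V_c|$. This gives $(2|V_c|-1)/5 < 0.4|V_c|$.

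The main obstacle is the first step: showing rigorously that the $\eta \le 1$ test really yields an acyclic $G_t$. The count $\eta$ is taken at dequeue time, not at enqueue time, so I would state the invariant carefully. The invariant is that, at every moment, the set of visited variables equals $N(t_{\text{current}})$ and the current partial subgraph is a tree. I would then check that acceptance preserves this invariant. The counting step is routine once connectivity and acyclicity are established. In fact only acyclicity is needed, since for a forest $|E| \le |V| - 1$ still holds.
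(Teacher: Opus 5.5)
Your proof is correct and follows essentially the paper's argument. The paper counts incrementally: the root contributes $w$ variables and each later check exactly $w-1$ new ones, giving $1+(w-1)|t|\le|V_v|$. That is the same identity $|N(t)|=(w-1)|t|+1$ you get from $|E(G_t)|=|V(G_t)|-1$ with $|E(G_t)|=w|t|$. Your check that every accepted non-root check has exactly $\eta=1$ (enqueue through a visited neighbour, with visited marks reset per root) justifies a step the paper only asserts.
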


\begin{proof}
 Each check node in a tree of degree $w$ is connected to $w$ distinct variable nodes. 
When a new check is added to the subtree, it must share exactly one variable node 
with the existing check nodes to maintain acyclicity; hence, each additional check 
introduces exactly $(w-1)$ new variable nodes. 

Consequently, a subtree containing $|t|$ check nodes covers $1 + (w-1)|t|$ variable nodes considering the check node root as well. 
Since the Tanner graph contains at most $|V_v|$ variable nodes, it follows that
$1 + (w-1)|t| \le |V_v|$, which simplifies to the desired bound $|t| \le (|V_v| - 1)/(w - 1)$. 
\end{proof}



Intuitively, the proposed subtree-based construction improves decoding performance by mitigating trapping-set effects \cite{Raveendran2021trappingsetsof,1003839} while generating structured redundant representations. Since BP is exact on tree graphs \cite{gallager1962low}, decoding over each subtree-induced submatrix $H_t$ benefits from locally cycle-free structure. Moreover, the resulting subtrees yield connected and structured partitions of the Tanner graph, particularly in graphs with many short cycles, which are closely related to girth properties and similar structural observations in \cite{Rabeti2025BoundsAN}. Compared to random augmentation \cite{rigby2019modified}, these structures induce more diverse decoding trajectories and improve convergence behavior.

\subsection{Decision Making (DM) Rule}



After constructing the candidate list $\cL$, the output error vector is selected as $\hat{e} = f_{\mathrm{DM}}(\cL)$. We propose the following rule:

\textbf{Frequency-Weighted Scoring (FWS).}
Each candidate is assigned a score based on its frequency in the list and its Hamming weight:
$
    f_{\mathrm{DM}}^{\text{FWS}}(\cL) 
    = \arg\max_{e \in \cL} \frac{|\{e' \in \cL : e' = e\}|}{w_H(e) + 1}.
$
The numerator favors candidates that appear repeatedly across BP instances, while the denominator penalizes higher-weight errors. If no decoder converges, we declare a decoding failure and return the all-zero error vector.

\subsection{Complexity and Latency Analysis}

In Algorithm \ref{alg:mbbp-decoder}, MBBP-LD runs multiple BP instances in parallel over augmented matrices with at most $\delta$ additional rows, as defined in Lemma \ref{lem:subtree-size}. For check-regular graphs with bounded degree $w$, this yields $\delta = \mathcal{O}(|V_c|)$. In particular, for weight-$6$ generalized bicycle (GB) codes with $|V_v| = 2|V_c|$, we have $\delta \le 0.4\,|V_c|$, so the augmentation increases the number of check nodes by at most a constant factor.

The decoding latency, assuming all BP instances are executed in parallel, is determined by the slowest parallel instance and scales as
$
\mathcal{O}\!\left(
K_{\max}\big(|V_v| d_v + (|V_c| + \delta)d_c\big)
\right),
$
where $K_{\max} \triangleq \max_{t \in \mathcal{T}} K_t$ is the maximum number of BP iterations over all subtrees. For bounded $d_v$ and $d_c$, this reduces to $\mathcal{O}(K_{\max} |V_v|)$, yielding latency on the same order as standard BP decoding.

\section{Numerical Results}
\label{sec:num-res}

In Fig. \ref{fig:fer_loglog144}--\ref{fig:fer_loglogB1}, we evaluate MBBP-LD on two bivariate bicycle codes, $[[144,12,12]]$ and $[[288,12,18]]$ \cite{Bravyi_2024}, and the B1 code $[[882,24,18 \leq d \leq 24]]$ \cite{Panteleev_2021}, over a binary symmetric channel with independent $X$-type errors. Each point is obtained after 100 decoding failures.

We compare MBBP-LD with BP, BP-Serial, BP-OSD \cite{Panteleev_2021}, and BPGD under both parallel \cite{yao2024beliefpropagationdecodingquantum} and serial scheduling \cite{moradi2026sequential}. All methods use a maximum of $I_{\max}=100$ iterations. BP, BP-Serial, and MBBP-LD employ normalized min-sum decoding with scaling factor $\alpha=0.875$, where BP uses parallel scheduling and the others use serial scheduling. BPGD variants use product-sum decoding with $T=100$ decimation steps. The BP, BP-Serial, and BP-OSD implementations follow the libraries in \cite{Roffe_2020, Roffe_LDPC_Python_tools_2022}.

Across all codes, MBBP-LD consistently outperforms BP-OSD and BPGD, while remaining competitive with BPGD-Serial. For the $[[144,12,12]]$ code, it achieves $3$--$24\%$ reduction in logical error rate (LER) compared to BP-OSD and $7$--$17\%$ compared to BPGD, with $2$--$8\%$ gains over BPGD-Serial at moderate-to-high noise. For the $[[288,12,18]]$ code, the reduction reaches up to $49\%$ compared to BP-OSD and $36\%$ compared to BPGD, with up to $10\%$ improvement over BPGD-Serial for $p \geq 0.05$.


For the B1 code $[[882,24]]$, MBBP-LD achieves significant gains over BP-OSD (exceeding $91\%$ at $p=0.04$) and up to $66\%$ reduction in logical error rate (LER) compared to BPGD (parallel) at low noise, while remaining competitive at higher noise levels. Although BPGD-Serial offers slightly better LER in the low-noise regime, this gap must be interpreted in light of the substantially different computational cost. For example, at $p=0.04$, BPGD-Serial requires about $2.5$ iterations on average, compared to $21.7$ for BPGD and $14.4$ for MBBP-LD. This reflects the effect of decimation, which can reduce iterations per stage but incurs significantly higher overall decoding cost. Furthermore, BPGD-based decoders typically employ product-sum updates, whereas MBBP-LD uses normalized min-sum that may offer significantly lower computational complexity.

Table \ref{tab:avg_iters} highlights the corresponding complexity trade-off for the $[[288,12,18]]$ code. MBBP-LD requires only a moderate increase in iterations compared to BP-OSD (e.g., $81.31$ vs.\ $43.03$ at $p=0.08$), remaining within the same order and preserving BP-like latency. In contrast, BPGD and BPGD-Serial incur orders-of-magnitude larger total iteration counts (e.g., exceeding $9.5\times10^3$ at $p=0.08$ and $2\times10^4$ at $p=0.10$) due to repeated decoding and decimation. Thus, while BPGD-based methods can achieve lower LER in the low-noise regime, MBBP-LD offers a favorable trade-off between error-rate performance and decoding efficiency, avoiding both super-linear OSD post-processing and the large iteration overhead of decimation-based decoding.

\definecolor{colorBPOSD}{HTML}{1F77B4}
\definecolor{colorBP}{HTML}{8C8C8C}
\definecolor{colorBPSerial}{HTML}{FF7F0E}
\definecolor{colorBPGD}{HTML}{2CA02C}
\definecolor{colorBPGDSerial}{HTML}{8C564B}
\definecolor{colorMBBPLD}{HTML}{6A0DAD}

\begin{figure}[t]
    \centering
    \resizebox{0.8\columnwidth}{!}{%
    \begin{tikzpicture}
    \begin{loglogaxis}[
        scale only axis,
        width  = 6.5cm,
        height = 6.5cm,
        xlabel = {Physical Error Rate $p_x$},
        ylabel = {Logical Error Rate},
        xlabel style     = {font=\small},
        ylabel style     = {font=\small},
        tick label style = {font=\small},
        xmin = 0.018, xmax = 0.112,
        ymin = 5e-5,  ymax = 1.5,
        xtick       = {0.02, 0.03, 0.04, 0.05, 0.06, 0.07, 0.08, 0.09, 0.10},
        xticklabels = {$0.02$, $0.03$, $0.04$, $0.05$, $0.06$,
                       $0.07$, $0.08$, $0.09$, $0.10$},
        xticklabel style = {rotate=45, anchor=north east, font=\small},
        xtick align      = inside,
        ytick align      = inside,
        enlarge x limits = 0.02,
        enlarge y limits = 0.02,
        clip             = true,
        grid             = both,
        grid style       = {line width=0.4pt, black!25},
        minor grid style = {line width=0.3pt, black!15},
        minor tick num   = 1,
        axis line style  = {line width=0.8pt},
        legend style = {
            at           = {(0.99, 0.01)},
            anchor       = south east,
            font         = \footnotesize,
            fill         = white,
            draw         = black,
            cells        = {anchor=west},
            column sep   = 2pt,
            row sep      = 1pt,
            inner sep    = 4pt,
        },
        legend columns    = 1,
        legend cell align = left,
    ]

    \addplot[color=colorBP, mark=square*, mark size=3pt,
             line width=1.4pt, solid] coordinates {
        (0.02,9.990e-04)(0.03,4.908e-03)(0.04,1.902e-02)(0.05,6.367e-02)
        (0.06,1.412e-01)(0.07,2.526e-01)(0.08,4.197e-01)(0.09,6.055e-01)(0.10,7.821e-01)
    };
    \addlegendentry{BP}

    \addplot[color=colorBPSerial, mark=triangle*, mark size=3.5pt,
             line width=1.4pt, solid] coordinates {
        (0.02,1.250e-04)(0.03,1.620e-03)(0.04,1.160e-02)(0.05,4.401e-02)
        (0.06,1.093e-01)(0.07,2.150e-01)(0.08,3.607e-01)(0.09,5.688e-01)(0.10,7.051e-01)
    };
    \addlegendentry{BP-Serial}

    \addplot[color=colorBPOSD, mark=*, mark size=3pt,
             line width=1.4pt, solid] coordinates {
        (0.02,1.152e-04)(0.03,1.438e-03)(0.04,1.021e-02)(0.05,3.815e-02)
        (0.06,9.708e-02)(0.07,1.942e-01)(0.08,3.404e-01)(0.09,5.321e-01)(0.10,6.603e-01)
    };
    \addlegendentry{BP-OSD}

    \addplot[color=colorBPGD, mark=diamond*, mark size=4pt,
             line width=1.4pt, solid] coordinates {
        (0.02,1.023e-04)(0.03,1.303e-03)(0.04,8.661e-03)(0.05,3.374e-02)
        (0.06,8.403e-02)(0.07,1.831e-01)(0.08,3.090e-01)(0.09,5.000e-01)(0.10,6.731e-01)
    };
    \addlegendentry{BPGD}

    \addplot[color=colorBPGDSerial, mark=*, mark size=3pt,
             line width=1.4pt, dashed] coordinates {
        (0.02,9.156e-05)(0.03,1.136e-03)(0.04,8.043e-03)(0.05,3.081e-02)
        (0.06,8.194e-02)(0.07,1.727e-01)(0.08,3.007e-01)(0.09,5.000e-01)(0.10,6.603e-01)
    };
    \addlegendentry{BPGD-Serial}

    \addplot[color=colorMBBPLD, mark=star, mark size=4.5pt,
             mark options={fill=colorMBBPLD, line width=0.8pt},
             line width=1.4pt, solid] coordinates {
        (0.02,8.989e-05)(0.03,1.082e-03)(0.04,7.733e-03)(0.05,2.934e-02)
        (0.06,7.829e-02)(0.07,1.599e-01)(0.08,2.768e-01)(0.09,4.587e-01)(0.10,6.410e-01)
    };
    \addlegendentry{MBBP-LD (proposed)}

    \end{loglogaxis}
    \end{tikzpicture}}%
    \caption{Decoding performance of the proposed MBBP-LD for Bivariate Bicycle Code $[[144,12,12]]$ \cite{Bravyi_2024}
    }
    \label{fig:fer_loglog144}
\end{figure}


\definecolor{colorBPOSD}{HTML}{1F77B4}
\definecolor{colorBP}{HTML}{8C8C8C}
\definecolor{colorBPSerial}{HTML}{FF7F0E}
\definecolor{colorBPGD}{HTML}{2CA02C}
\definecolor{colorBPGDSerial}{HTML}{8C564B}
\definecolor{colorMBBPLD}{HTML}{6A0DAD}

\begin{figure}[t]
    \centering
    \resizebox{0.8\columnwidth}{!}{%
    \begin{tikzpicture}
    \begin{loglogaxis}[
        scale only axis,        
        width  = 6.5cm,         
        height = 6.5cm,
        xlabel = {Physical Error Rate $p_x$},
        ylabel = {Logical Error Rate},
        xlabel style     = {font=\small},
        ylabel style     = {font=\small},
        tick label style = {font=\small},
        xmin = 0.028, xmax = 0.105,
        ymin = 8e-6,  ymax = 1.5,
        xtick       = {0.03, 0.04, 0.05, 0.06, 0.07, 0.08, 0.09, 0.10},
        xticklabels = {$0.03$, $0.04$, $0.05$, $0.06$,
                       $0.07$, $0.08$, $0.09$, $0.10$},
        xticklabel style = {rotate=45, anchor=north east, font=\small},
        xtick align      = inside,
        ytick align      = inside,
        enlarge x limits = 0.02,
        enlarge y limits = 0.02,
        clip             = true,
        grid             = both,
        grid style       = {line width=0.4pt, black!25},
        minor grid style = {line width=0.3pt, black!15},
        minor tick num   = 1,
        axis line style  = {line width=0.8pt},
        legend style = {
            at           = {(0.99, 0.01)},
            anchor       = south east,
            font         = \footnotesize,
            fill         = white,
            draw         = black,
            cells        = {anchor=west},
            column sep   = 2pt,
            row sep      = 1pt,
            inner sep    = 4pt,
        },
        legend columns    = 1,
        legend cell align = left,
    ]

    \addplot[color=colorBP, mark=square*, mark size=3pt,
             line width=1.4pt, solid] coordinates {
        (0.03,3.155e-03)(0.04,1.062e-02)(0.05,3.712e-02)(0.06,1.083e-01)
        (0.07,2.549e-01)(0.08,4.772e-01)(0.09,6.557e-01)(0.10,7.875e-01)
    };
    \addlegendentry{BP}

    \addplot[color=colorBPSerial, mark=triangle*, mark size=3.5pt,
             line width=1.4pt, solid] coordinates {
        (0.03,6.785e-05)(0.04,1.669e-03)(0.05,1.620e-02)(0.06,6.408e-02)
        (0.07,1.806e-01)(0.08,3.673e-01)(0.09,5.943e-01)(0.10,7.375e-01)
    };
    \addlegendentry{BP-Serial}

    \addplot[color=colorBPOSD, mark=*, mark size=3pt,
             line width=1.4pt, solid] coordinates {
        (0.03,3.768e-05)(0.04,9.820e-04)(0.05,1.122e-02)(0.06,4.609e-02)
        (0.07,1.434e-01)(0.08,3.056e-01)(0.09,5.047e-01)(0.10,6.938e-01)
    };
    \addlegendentry{BP-OSD}

    \addplot[color=colorBPGD, mark=diamond*, mark size=4pt,
             line width=1.4pt, solid] coordinates {
        (0.03,3.124e-05)(0.04,7.750e-04)(0.05,9.031e-03)(0.06,4.167e-02)
        (0.07,1.290e-01)(0.08,3.056e-01)(0.09,5.047e-01)(0.10,6.625e-01)
    };
    \addlegendentry{BPGD}

    \addplot[color=colorBPGDSerial, mark=*, mark size=3pt,
             line width=1.4pt, dashed] coordinates {
        (0.03,1.549e-05)(0.04,4.830e-04)(0.05,7.371e-03)(0.06,3.441e-02)
        (0.07,1.135e-01)(0.08,2.788e-01)(0.09,4.811e-01)(0.10,6.375e-01)
    };
    \addlegendentry{BPGD-Serial}

    \addplot[color=colorMBBPLD, mark=star, mark size=4.5pt,
             mark options={fill=colorMBBPLD, line width=0.8pt},
             line width=1.4pt, solid] coordinates {
        (0.03,1.994e-05)(0.04,5.020e-04)(0.05,6.640e-03)(0.06,3.157e-02)
        (0.07,1.032e-01)(0.08,2.681e-01)(0.09,4.717e-01)(0.10,6.250e-01)
    };
    \addlegendentry{MBBP-LD (proposed)}

    \end{loglogaxis}
    \end{tikzpicture}}
    \caption{Decoding performance of the proposed MBBP-LD for Bivariate Bicycle Code $[[288,12,18]]$ \cite{Bravyi_2024}
    }
    \label{fig:fer_loglog288}
\end{figure}

 

\definecolor{colorBPOSD}{HTML}{1F77B4}
\definecolor{colorBP}{HTML}{8C8C8C}
\definecolor{colorBPSerial}{HTML}{FF7F0E}
\definecolor{colorBPGD}{HTML}{2CA02C}
\definecolor{colorBPGDSerial}{HTML}{8C564B}
\definecolor{colorMBBPLD}{HTML}{6A0DAD}

\begin{figure}[t]
    \centering
    \resizebox{0.8\columnwidth}{!}{%
    \begin{tikzpicture}
    \begin{loglogaxis}[
        scale only axis,
        width  = 6.5cm,
        height = 6.5cm,
        xlabel = {Physical Error Rate $p_x$},
        ylabel = {Logical Error Rate},
        xlabel style     = {font=\small},
        ylabel style     = {font=\small},
        tick label style = {font=\small},
        xmin = 0.036, xmax = 0.107,
        ymin = 2e-7,  ymax = 1.5,
        xtick       = {0.04, 0.05, 0.06, 0.07, 0.08, 0.09, 0.10},
        xticklabels = {$0.04$, $0.05$, $0.06$, $0.07$, $0.08$, $0.09$, $0.10$},
        xticklabel style = {rotate=45, anchor=north east, font=\small},
        xtick align      = inside,
        ytick align      = inside,
        enlarge x limits = 0.02,
        enlarge y limits = 0.02,
        clip             = true,
        grid             = both,
        grid style       = {line width=0.4pt, black!25},
        minor grid style = {line width=0.3pt, black!15},
        minor tick num   = 1,
        axis line style  = {line width=0.8pt},
        legend style = {
            at           = {(0.99, 0.01)},
            anchor       = south east,
            font         = \footnotesize,
            fill         = white,
            draw         = black,
            cells        = {anchor=west},
            column sep   = 2pt,
            row sep      = 1pt,
            inner sep    = 4pt,
        },
        legend columns    = 1,
        legend cell align = left,
    ]

    \addplot[color=colorBP, mark=square*, mark size=3pt,
             line width=1.4pt, solid] coordinates {
        (0.04,2.142e-02)(0.05,4.641e-02)(0.06,1.049e-01)(0.07,2.604e-01)
        (0.08,5.233e-01)(0.09,8.144e-01)(0.10,9.590e-01)
    };
    \addlegendentry{BP}

    \addplot[color=colorBPSerial, mark=triangle*, mark size=3.5pt,
             line width=1.4pt, solid] coordinates {
        (0.04,2.567e-05)(0.05,5.317e-04)(0.06,8.730e-03)(0.07,7.896e-02)
        (0.08,2.759e-01)(0.09,6.598e-01)(0.10,8.934e-01)
    };
    \addlegendentry{BP-Serial}

    \addplot[color=colorBPOSD, mark=*, mark size=3pt,
             line width=1.4pt, solid] coordinates {
        (0.04,9.026e-06)(0.05,2.308e-04)(0.06,4.426e-03)(0.07,5.107e-02)
        (0.08,2.191e-01)(0.09,5.979e-01)(0.10,8.770e-01)
    };
    \addlegendentry{BP-OSD}

    \addplot[color=colorBPGD, mark=diamond*, mark size=4pt,
             line width=1.4pt, solid] coordinates {
        (0.04,2.368e-06)(0.05,1.249e-04)(0.06,3.066e-03)(0.07,4.020e-02)
        (0.08,2.049e-01)(0.09,5.052e-01)(0.10,8.033e-01)
    };
    \addlegendentry{BPGD}

    \addplot[color=colorBPGDSerial, mark=*, mark size=3pt,
             line width=1.4pt, dashed] coordinates {
        (0.04,5.179e-07)(0.05,3.801e-05)(0.06,1.705e-03)(0.07,3.115e-02)
        (0.08,1.765e-01)(0.09,5.000e-01)(0.10,7.869e-01)
    };
    \addlegendentry{BPGD-Serial}

    \addplot[color=colorMBBPLD, mark=star, mark size=4.5pt,
             mark options={fill=colorMBBPLD, line width=0.8pt},
             line width=1.4pt, solid] coordinates {
        (0.04,8.139e-07)(0.05,6.833e-05)(0.06,2.030e-03)(0.07,3.622e-02)
        (0.08,2.028e-01)(0.09,5.155e-01)(0.10,8.197e-01)
    };
    \addlegendentry{MBBP-LD (proposed)}

    \end{loglogaxis}
    \end{tikzpicture}}%
    \caption{Decoding performance of the proposed MBBP-LD for B1 code $[[882, 24, 18 \leq d \leq 24]]$ \cite{Panteleev_2021}
    }
    \label{fig:fer_loglogB1}
\end{figure}




 \begin{table}[t]
\centering
\renewcommand{\arraystretch}{1.1}
\begin{tabular}{lcccc}
\hline
Decoder & $p=0.04$ & $p=0.06$ & $p=0.08$ & $p=0.10$ \\
\hline
BP-OSD        & 3.34  & 12.07   & 43.03   & 77.06 \\
MBBP-LD       & 13.79 & 49.48   & 81.31   & 94.94 \\
BPGD          & 40.02 & 1447.73 & 9559.32 & 20432.68 \\
BPGD-Serial   & 19.59 & 1184.65 & 9144.15 & 20209.15 \\
\hline
\end{tabular}
\caption{Average BP iterations of code $[[288,12,18]]$ \cite{Bravyi_2024}; BPGD variants include iterations over all decimation steps.}
\label{tab:avg_iters}
\end{table}
\begin{table}[t!]
\centering

\begin{tabular}{c|cc|cc}
\hline
\multirow{2}{*}{$p$} 
& \multicolumn{2}{c|}{Tree} 
& \multicolumn{2}{c}{Random} \\
\cline{2-5}
& Mean LER & Best LER & Mean LER & Best LER \\
\hline
0.08 & 0.2928 & 0.2690 & 0.3038 & 0.2865 \\
0.07 & 0.1004 & 0.0904 & 0.1030 & 0.0954 \\
0.06 & 0.0289 & 0.0250 & 0.0303 & 0.0273 \\
0.05 & 0.00511 & 0.00454 & 0.00545 & 0.00489 \\
0.04 & 0.000425 & 0.000366 & 0.000469 & 0.000412 \\
\hline
\end{tabular}
\caption{Comparison of subtree-based and matched random partitions for the BB code $[[288,12,18]]$ \cite{Bravyi_2024}.}
\label{tab:tree_vs_random}
\end{table}

\subsection{Comparison with Random Augmentation}

Table \ref{tab:tree_vs_random} compares subtree-based partitions with matched random partitions for the BB code $[[288,12,18]]$ \cite{Bravyi_2024} across different error rates. We consider $20$ partitions in each case. For a fair comparison, random partitions are constructed by replacing subtree augmentations with randomly selected rows while preserving the same group structure and number of duplicated rows. We observe that subtree-based partitions consistently achieve lower LER than their random counterparts, with relative improvements ranging from $2.5\%$ to $9.4\%$, and more pronounced gains in the low-noise regime.

\section{Conclusion}
\label{sec:conclusion}

In this paper, we proposed the MBBP-LD decoder for QLDPC codes, based on structured redundant parity-check representations derived from subtree decompositions of the Tanner graph. The proposed framework improves error-rate performance over existing BP-based decoders in various scenarios, including BP-OSD and BPGD, while maintaining BP-like decoding latency under parallel implementation. Future work includes designing code-specific constructions of redundant checks and developing more refined decision rules. The framework also naturally supports additional decoding diversity through different permutations and layerings with minimal impact on latency.

\bibliographystyle{IEEEtran}
{\footnotesize \bibliography{ref}}

@article{gallager1962low,
  title={Low-density parity-check codes},
  author={Gallager, Robert},
  journal={IRE Transactions on information theory},
  volume={8},
  number={1},
  pages={21--28},
  year={1962},
  publisher={IEEE}
}

@article{7336474,
  title={Fifteen years of quantum {LDPC} coding and improved decoding strategies},
  author={Babar, Zunaira and Botsinis, Panagiotis and Alanis, Dimitrios and Ng, Soon Xin and Hanzo, Lajos},
  journal={iEEE Access},
  volume={3},
  pages={2492--2519},
  year={2015},
  publisher={IEEE}
}

@article{Hehn_2010,
  title={Multiple-bases belief-propagation decoding of high-density cyclic codes},
  author={Hehn, Thorsten and Huber, Johannes B and Milenkovic, Olgica and Laendner, Stefan},
  journal={IEEE transactions on communications},
  volume={58},
  number={1},
  pages={1--8},
  year={2010},
  publisher={IEEE}
}

@inproceedings{4557244,
  title={Multiple-bases belief-propagation for decoding of short block codes},
  author={Hehn, Thorsten and Huber, Johannes B and Laendner, Stefan and Milenkovic, Olgica},
  booktitle={2007 IEEE International Symposium on Information Theory},
  pages={311--315},
  year={2007},
  organization={IEEE}
}

@article{Raveendran2021trappingsetsof,
  title={Trapping sets of quantum {LDPC} codes},
  author={Raveendran, Nithin and Vasi{\'c}, Bane},
  journal={Quantum},
  volume={5},
  pages={562},
  year={2021},
  publisher={Verein zur F{\"o}rderung des Open Access Publizierens in den Quantenwissenschaften}
}

@inproceedings{crest2023layereddecodingquantumldpc,
  title={Layered decoding of quantum {LDPC} codes},
  author={Du Crest, Julien and Garcia-Herrero, Francisco and Mhalla, Mehdi and Savin, Valentin and Valls, Javier},
  booktitle={2023 12th International Symposium on Topics in Coding (ISTC)},
  pages={1--5},
  year={2023},
  organization={IEEE}
}

@article{Panteleev_2021,
   title={Degenerate Quantum {LDPC} Codes With Good Finite Length Performance},
   volume={5},
   ISSN={2521-327X},
   journal={Quantum},
   publisher={Verein zur Forderung des Open Access Publizierens in den Quantenwissenschaften},
   author={Panteleev, Pavel and Kalachev, Gleb},
   year={2021},
   month=nov, pages={585} }

@article{Bravyi_2024,
  title={High-threshold and low-overhead fault-tolerant quantum memory},
  author={Bravyi, Sergey and Cross, Andrew W and Gambetta, Jay M and Maslov, Dmitri and Rall, Patrick and Yoder, Theodore J},
  journal={Nature},
  volume={627},
  number={8005},
  pages={778--782},
  year={2024},
  publisher={Nature Publishing Group UK London}
}

@article{1003839,
  title={Finite-length analysis of low-density parity-check codes on the binary erasure channel},
  author={Di, Changyan and Proietti, David and Telatar, I Emre and Richardson, Thomas J and Urbanke, R{\"u}diger L},
  journal={IEEE Transactions on Information theory},
  volume={48},
  number={6},
  pages={1570--1579},
  year={2002},
  publisher={IEEE}
}

@article{Roffe_2020,
  title={Decoding across the quantum low-density parity-check code landscape},
  author={Roffe, Joschka and White, David R and Burton, Simon and Campbell, Earl},
  journal={Physical Review Research},
  volume={2},
  number={4},
  pages={043423},
  year={2020},
  publisher={APS}
}

@article{Roffe_LDPC_Python_tools_2022,
  title={{LDPC}: Python tools for low density parity check codes},
  author={Roffe, Joschka},
  journal={PyPI https://pypi. org/project/ldpc},
  year={2022}
}

@article{Breuckmann_2021,
  title={Quantum low-density parity-check codes},
  author={Breuckmann, Nikolas P and Eberhardt, Jens Niklas},
  journal={PRX quantum},
  volume={2},
  number={4},
  pages={040101},
  year={2021},
  publisher={APS}
}

@article{vasic2025quantumlowdensityparitycheckcodes,
  title={Quantum Low-Density Parity-Check Codes},
  author={Vasic, Bane and Savin, Valentin and Pacenti, Michele and Borah, Shantom and Raveendran, Nithin},
  journal={arXiv preprint arXiv:2510.14090},
  year={2025}
}

@inproceedings{crest2023stabilizerinactivationmessagepassingdecoding,
  title={Stabilizer inactivation for message-passing decoding of quantum {LDPC} codes},
  author={Du Crest, Julien and Mhalla, Mehdi and Savin, Valentin},
  booktitle={2022 IEEE Information Theory Workshop (ITW)},
  pages={488--493},
  year={2022},
  organization={IEEE}
}

@inproceedings{yao2024beliefpropagationdecodingquantum,
  title={Belief propagation decoding of quantum {LDPC} codes with guided decimation},
  author={Yao, Hanwen and Laban, Waleed Abu and H{\"a}ger, Christian and i Amat, Alexandre Graell and Pfister, Henry D},
  booktitle={2024 IEEE International Symposium on Information Theory (ISIT)},
  pages={2478--2483},
  year={2024},
  organization={IEEE}
}

@article{chytas2025enhancedminsumdecodingquantum,
  title={Enhanced min-sum decoding of quantum codes using previous iteration dynamics},
  author={Chytas, Dimitris and Raveendran, Nithin and Vasic, Bane},
  journal={arXiv preprint arXiv:2501.05021},
  year={2025}
}

@INPROCEEDINGS{Rabeti2025BoundsAN,
  author={Rabeti, Sheida and Moradi, Mohsen and Mahdavifar, Hessam},
  booktitle={2025 IEEE International Symposium on Information Theory (ISIT)}, 
  title={Bounds and New Constructions for Girth-Constrained Regular Bipartite Graphs}, 
  year={2025},
  volume={},
  number={},
  pages={1-6},
  doi={10.1109/ISIT63088.2025.11195214}}

@article{rigby2019modified,
  title={Modified belief propagation decoders for quantum low-density parity-check codes},
  author={Rigby, Alex and Olivier, JC and Jarvis, Peter},
  journal={Physical Review A},
  volume={100},
  number={1},
  pages={012330},
  year={2019},
  publisher={APS}
}

@article{rigby2018augmented,
  title={Augmented decoders for {LDPC} codes},
  author={Rigby, Alex R and Olivier, Jan C and Myburgh, Hermanus C and Xiao, Chengshan and Salmon, Brian P},
  journal={EURASIP Journal on Wireless Communications and Networking},
  volume={2018},
  number={1},
  pages={189},
  year={2018},
  publisher={Springer}
}

@article{wolanski2024ambiguity,
  title={Ambiguity clustering: an accurate and efficient decoder for {QLDPC} codes},
  author={Wolanski, Stasiu and Barber, Ben},
  journal={arXiv preprint arXiv:2406.14527},
  year={2024}
}

@article{hillmann2025localized,
  title={Localized statistics decoding for quantum low-density parity-check codes},
  author={Hillmann, Timo and Berent, Lucas and Quintavalle, Armanda O and Eisert, Jens and Wille, Robert and Roffe, Joschka},
  journal={Nature Communications},
  volume={16},
  number={1},
  pages={8214},
  year={2025},
  publisher={Nature Publishing Group UK London}
}

@article{demarti2024almost,
  title={An almost-linear time decoding algorithm for quantum LDPC codes under circuit-level noise},
  author={deMarti iOlius, Antonio and Etxezarreta Martinez, Imanol and Roffe, Joschka and Etxezarreta Martinez, Josu},
  journal={arXiv e-prints},
  pages={arXiv--2409},
  year={2024}
}

@article{moradi2026sequential,
  title={Sequential {BP}-based Decoding of {QLDPC} Codes},
  author={Moradi, Mohsen and Habib, Salman and Nourozi, Vahid and Mitchell, David GM},
  journal={arXiv preprint arXiv:2602.13420},
  year={2026}
}

\end{document}